\documentclass[comsoc,conference,letterpaper]{IEEEtran}
\IEEEoverridecommandlockouts
\usepackage{amssymb,amsfonts,amsthm}
\usepackage[utf8]{inputenc} 
\usepackage{graphicx}
\usepackage[T1]{fontenc}
\usepackage{url}
\usepackage{ifthen}
\usepackage{cite}
\usepackage{xcolor}
\usepackage[cmex10]{amsmath} 
\newtheorem{theorem}{Theorem} 
\newtheorem{definition}{Definition}
\newtheorem{lemma}{Lemma}
\newtheorem{example}{Example}
\newtheorem{corollary}{Corollary}
\newcommand{\bs}[1]{\ensuremath{\boldsymbol{#1}}}

\begin{document}
\title{Non-binary LDPC codes for Data Storage \thanks{ This work was supported in part by the Estonian Research Council through the grant PRG2531.}
  }

 \author{
  \IEEEauthorblockN{Irina Bocharova, Boris Kudryashov, Henk D.L. Hollmann, and Vitaly Skachek }
 \IEEEauthorblockA{Institute of Computer Science\\
                    University of Tartu\\
                    Tartu, Estonia\\
                    Email: \{irinaboc, boriskud,  henk.hollmann,vitaly.skachek\}@ut.ee} }

\maketitle  


\begin{abstract}
In modern data storage systems, non-binary LDPC codes for recovering from disk failures are increasingly considered strong competitors to MDS codes such as Reed-Solomon codes. Since disk failures can be modeled as erasures, we analyze non-binary LDPC codes over a $q$-ary field in the $q$-ary erasure channel, relative to MDS codes. Our focus is on non-binary LDPC codes whose parity-check matrix is obtained by replacing the non-zero entries of a binary base matrix by elements of a $q$-ary finite field, for any alphabet size $q$. For such LDPC codes, we introduce the notion of ultimate distance, which upper-bounds their minimum distance. We derive a random-coding bound on the number of non-correctable erasure patterns for the Gallager ensemble of regular non-binary LDPC codes under maximum-likelihood decoding. An algorithm for finding the ultimate distance is presented. A low-complexity algorithm for searching for the minimum distance of the non-binary LDPC code is proposed. Finally, we construct examples of non-binary LDPC codes achieving the ultimate distance.
\end{abstract}

\section{Introduction}
Codes over nonbinary alphabets are widely used in both large-scale data storage systems and distributed storage systems 
\cite{shen2025survey,ramkumar2022codes,park2017ldpc}.
Typically, Maximum Distance Separable (MDS) codes, in particular Reed-Solomon (RS) codes, are used due to their optimal distance properties and practical decoding algorithms. 

In most standards for data communications, algebraic codes 
are replaced by Low-Density-Parity-Check (LDPC) codes,
which are preferred due to their low decoding complexity. In distributed data storage, good locality of LDPC codes, that is, the ability to recover a disk failure by using only a small subset of neighboring disks, makes them strong competitors to MDS codes, whose localization properties are poor.
Replacing MDS codes by LDPC codes in data storage applications is considered in \cite{gaidioz2007exploring,park2017ldpc,yongmei2015large}.
The application of non-binary (NB) LDPC codes to flash memory is studied in \cite{hareedy2018combinatorial}.

In \cite{bocharova2026energy}, we explored a scenario where 
$q$-ary codes address both disk failure recovery and energy efficiency. The main idea here is to exploit the redundancy of erasure-correcting codes (ECC) to minimize the energy consumption of the disk drives: if information from an idle (switched-off) disk is requested, then it can be recovered, similarly to erasure correction, without switching on the disk by exploiting the redundancy of the employed ECC. 
In that paper, a simplified model of energy-efficient data storage was formulated and studied. 

In \cite{bocharova2026energy}, it is assumed that the system manager measures the activity of disk drives and classifies them in
the range from hot to cold. Manager’s
decision about changing the state of a disk depends on its temperature and the possibility of recovering its information by using currently
active disks. Effectively, the energy-saving problem is thus reduced to the erasure correction of idle hard disks.

Unlike for data transmission over a $q$-ary erasure channel (QEC), storage applications cannot easily utilize very long LDPC codes due to complexity and response-time limitations.  Short binary LDPC codes cannot be considered competitors of $q$-ary RS codes. The $q$-ary LDPC codes outperform their binary counterparts at the cost of higher complexity decoding. Moreover, to compete with MDS codes, it is not sufficient to rely solely on belief propagation (BP) decoding to achieve the required probability of uncorrectable erasure combinations. Instead, we apply a hybrid approach, combining BP and maximum-likelihood (ML) decoding algorithms: first, the BP decoder is used to correct erasures, and then the remaining erasures are corrected by the ML decoder. Such a decoder, on average, is simpler than a decoder for an MDS code, 
but, on the other hand, increasing the code length increases the number of computations required to recover the data. This imposes additional restrictions on the code length. 

Thus, we conclude that short-length NB  LDPC codes of length below 100 symbols are important for certain applications related to storage and distributed storage systems.

For example, the analysis and simulations in \cite{bocharova2026energy} show that although NB LDPC codes have smaller minimum distances than MDS codes, they provide the highest energy saving for scenarios under consideration. The other advantage of these codes is the low complexity of recovering the disk information. The obtained results motivate further research on the construction of NB LDPC codes specifically tailored for data storage applications.
 Our goal is to optimize such codes to achieve a better tradeoff between erasure correcting performance  and decoding complexity than that of  MDS codes of similar length.

In this paper, we analyze performances of NB LDPC codes in data storage scenarios. In Section II, we reduce the disk failure recovery problem to erasure correction in the QEC. In Section III, we introduce the notion of {\em ultimate distance} for NB LDPC codes. We prove that for any alphabet size, the minimum distance of the NB LDPC code is upper-bounded by its ultimate distance. Algorithms for finding the ultimate and minimum distances of the NB LDPC code are presented in the same section. In Section IV, we derive a random coding bound on the number of non-correctable erasure patterns for the regular NB LDPC codes, and in Section V, we construct and discuss some code examples.

\section{Preliminaries}
Assume that an $[n,k]_q$ linear code of length $n$ and dimension $k$ over the field GF($q$), $q=2^m$, is used to protect data from disk failure, where codeword symbols are stored across distinct disks.
Let $H=\{h_{ij}\}$ be the $r\times n$ parity-check matrix of the code,  where $r=n-k$ is  the number of redundant symbols. If $H$ is {\em sparse\/}, we refer to the code as an {\em LDPC-code\/} \cite{Gal63}.
Here, we consider $(J,K)$-regular LDPC codes, where every row and column of~$H$ contains $J$ and $K$ nonzero elements, respectively.  The $r \times n$ binary base matrix $B=\{b_{ij}\}$ is defined by~$b_{ij}=1 $ if $h_{ij}\neq 0$ and $b_{ij}=0$ otherwise.
The binary LDPC code determined by $B$ is called the {\em base\/} (or {\em parent\/}) code, with minimum distance $d_{\rm b}$ and Tanner graph girth (see \cite{Tan81}) $g_b$. 

The parity-check matrix $H$ is constructed as
$H=\{\mu_{ij}b_{ij}\}$, 
$i=1,\dots,r$, 
$j=1,\dots,n$, 
where the nonzero coefficients 
$\mu_{ij} \in {\rm GF}($q$)\setminus \{0\}$  
are called the {\it labels}, and the matrix 
$\mathcal M=\{\mu_{ij}\}$, $i=1,\dots,r$, $j=1,\dots,n$
the {\it labeling matrix}.
We denote the minimum distance of the obtained NB LDPC code by $d_{\rm q}$. For the parity-check matrix, we use the notations $H$, $H(B)$, and $H(\mathcal M,B)$ interchangeably.

Let $\bs \xi \in \{0,1\}^n$ represent a binary length~$n$ sequence of disk states,  where non-zeros in $\bs \xi$ correspond to failed or idle drives (i.e., to erased codeword positions).
This sequence is uniquely described by 
the set $I\subseteq \{1,\ldots,n\}$ of non-zero positions, where
$|I|=\nu$ is the number of erasures. 
%
%
For a given $\nu$ and $I=\{i_1,i_2,...,i_\nu\}$,  let 
$\bs c_I=(c_{i_1}, ..., c_{i_{\nu}} )$ be the vector of unknowns at erased positions in a codeword $\bs c$, and let $\bs c_{I^{\rm c}}$ be the vector of unerased values, where $I^{\rm c}\!=\!\{1,2,...,n\! \}\setminus I$. 
Let $H_I$ denote the submatrix of~$H$ consisting of columns indexed by $I$.

The condition $\bs c H^{\rm T}=\bs 0$ implies
$\bs c_I H_I^{\rm T} = \bs c_{I^{\rm c}} H_{I^{\rm c}}^{\rm T}\triangleq {\bs s}$,
where $\bs s$ is the syndrome vector, which is known to the decoder. 
The ML decoder solves the system of linear equations
\begin{equation} \label{main1}
\bs c_I   H_I^{\rm T}=\bs s.
\end{equation} 
If the solution not unique, that is, if
$\mbox{rank}\left( H_{I}\right)<|I|$,
then the corresponding set of erasures cannot be (uniquely) corrected; otherwise, it is correctable. 
Let 
\begin{equation*} 
\mathcal B_\nu\!=\!\left\{ I: |I|=\nu, \mbox{rank}\left( H_{I}\right)<\nu \right\}
\end{equation*}
denote the set of uncorrectable patterns of $\nu$ erasures, and let $ \; b_\nu=|\mathcal B_\nu|$ count their number. Then the probability of block error in the QEC with erasure probability $\delta$ (representing either disk failure or idle status) is
\begin{equation}P_{\rm b}(\delta)=
\sum_{\nu=1}^n b_{\nu} \delta^\nu(1-\delta)^{n-\nu}, \label{block_prob}
\end{equation}
will now be used as a performance measure.
The equivalent performance measure, which depends on code only, is the conditional probability of block error given the number of erasures
\begin{equation}
P_{\text {block}}(\nu)=\frac{b_\nu}{\binom{n}{\nu}}.
\end{equation}

\section{Ultimate Distance}
In this section we introduce the notion of {\em ultimate distance} for a $q$-ary LDPC code.
\begin{definition}
For a given base matrix $B$, the {\em ultimate distance\/} $d_{\rm u}(B)$ is defined as   
\[
d_{\rm u}(B)=\max_q \max_{\mathcal M} d_{\rm q} \left(H(\mathcal M, B)\right),
\]
where $d_{\rm q} \left(H(\mathcal M, B)\right)$ is the minimum distance of the $q$-ary LDPC code with labeling matrix $\mathcal M$ and base matrix $B$.
\end{definition} 
In other words, the ultimate distance $d_{\rm u}(B)$ serves as an upper bound on the minimum distance of the corresponding $q$-ary code for large $q$. 
The ultimate distance is an important code property in applications (such as, for example, energy-efficient coding) where the alphabet size~$q$ is a design parameter. 

\begin{definition} \label{active}
For an erasure pattern $I$, a row of the corresponding base submatrix $B_I$ is called {\em active\/} if it is non-zero. The number of active rows in~$B_I$ is denoted by~$a(I)$. \end{definition}

\begin{definition} \label{proper}
A binary square matrix $A$ of order $s$ 
is {\em proper} if there exists a permutation $P=p_1, \ldots, p_s$ of its rows such that $A(p_t,t)=1$ for $t=1, \ldots, s$, otherwise, it is {\em improper}. 
Similarly, a $t\times \ell$ binary matrix $A$ is proper if all its submatrices of order $s$, $s=\min\{t,\ell\}$ are {\em proper}, otherwise $A$ is {\em improper}.  
\end{definition}

Two parity-check matrices obtained from each other by a column and row permutation determine {\em equivalent codes}. Such matrices we call {\em equivalent}.

A proper matrix is permutationally equivalent to the matrix whose main diagonal does not contain zero entries. 
For example, a binary matrix  $B$ can be a base matrix for an MDS code iff it is proper.

\begin{example} The following square matrix is improper
\[A=
\begin{pmatrix}
0 & 1 & 1 & 0 \\
0 & 1 & 1 & 0 \\
1 & 0 & 1 & 1 \\
0 & 1 & 1 & 0
\end{pmatrix}.
\]
This matrix is equivalent to
\[
\tilde{A}=
\begin{pmatrix}
 1&1&0&0\\
 1&1&0&0\\
 1&1&0&0 \\
 0&1&1&1    
\end{pmatrix}.
\]
\end{example}

The notion of a proper matrix has the following interpretation in terms of graph theory. A binary matrix $A$ can be considered as a biadjacency matrix of 
a bipartite Tanner graph~\cite{tanner1981recursive}. 
The set of rows corresponds to the set $\mathcal{C}$ of check nodes. The set of columns corresponds to the set $\mathcal V$ of variable nodes, and nonzero elements of $A$ correspond to edges. 

In these notations, a proper square matrix $A$ of order $s$ determines a graph containing a {\em complete matching} from $\mathcal C$ to $\mathcal V$ (see, e.g., \cite[Chapter 5, Th. 5.1]{LW01}). Each permutation  $P$ in Definition \ref{proper} determines a complete matching. 
The proper rectangular $A$ of size $s\times t, s\le t$
corresponds to such a Tanner graph that any subset of $s$ variable nodes is a {\em complete matching.}      

\begin{lemma} \label{lemma1}
For a binary square matrix $A$ of order $s$, for $q>2$ there exists a labeling $\mathcal M$ by elements of {\rm GF}($q$),  such that the matrix $H(\mathcal M,A)$ is non-singular, iff $A$ is proper. 
\end{lemma}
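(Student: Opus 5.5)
The plan is to write $\det H(\mathcal M,A)$ as a polynomial in the labels and to show that it vanishes identically exactly when $A$ is improper. Treat each label $\mu_{ij}$ with $a_{ij}=1$ as an independent indeterminate $x_{ij}$. By the Leibniz expansion,
\[
\det H(\mathcal M,A)=\sum_{\sigma\in S_s}\mathrm{sgn}(\sigma)\prod_{t=1}^{s} a_{\sigma(t)t}\,x_{\sigma(t)t}.
\]
Only permutations with $A(\sigma(t),t)=1$ for all $t$ contribute. These are exactly the permutations $P$ of Definition~\ref{proper}, or equivalently the complete matchings in the Tanner graph of $A$. Denote this polynomial by $D(x)\in {\rm GF}(q)[x_{ij}]$.

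For the ``only if'' direction, suppose $A$ is improper. Then no permutation survives, so $D\equiv 0$. Hence $\det H(\mathcal M,A)=0$ for every $q$ and every labeling $\mathcal M$, and $H(\mathcal M,A)$ is always singular.

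For the ``if'' direction, suppose $A$ is proper. First I will note that distinct surviving permutations give distinct monomials $\prod_t x_{\sigma(t)t}$, because the monomial determines $\sigma$. So no cancellation can occur, even in characteristic $2$ where signs are irrelevant. Thus $D$ is a nonzero polynomial with coefficients $\pm1$, and it is multilinear: each variable appears with degree at most $1$.

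It remains to find values of all $x_{ij}$ in $S={\rm GF}(q)\setminus\{0\}$ with $D\neq 0$. I will prove by induction on the number $N$ of variables the following claim: any nonzero multilinear polynomial over ${\rm GF}(q)$ has a nonvanishing point in $S^N$, provided $|S|\ge 2$. This is a special case of the Combinatorial Nullstellensatz.
\begin{itemize}
\item If $N=0$, the polynomial is a nonzero constant and there is nothing to prove.
\item Otherwise, write $D=x\,D_1+D_0$, where $x$ is one variable and $D_0,D_1$ do not involve $x$.
\item If $D_1\not\equiv 0$, choose by induction values in $S$ of the remaining variables making $D_1\neq0$. Then $D$ is a nonconstant affine function of $x$, which has exactly one root. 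Since $|S|=q-1\ge2$, some $x\in S$ avoids that root.
\item If $D_1\equiv 0$, then $D_0\not\equiv 0$. Apply induction to $D_0$ and assign $x$ any nonzero value.
\end{itemize}
The resulting values form a labeling $\mathcal M$ with all labels nonzero and $\det H(\mathcal M,A)\neq 0$.

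The main subtlety is this last step: the labels must be nonzero, so we cannot simply evaluate $D$ at an arbitrary point of ${\rm GF}(q)^N$. This is exactly where $q>2$ is needed. For $q=2$ we have $S=\{1\}$, the only labeling is $A$ itself, and $\det A$ can vanish even though $A$ is proper; for example, the all-ones $2\times2$ matrix is proper but singular over ${\rm GF}(2)$. I will note this as a remark to justify the hypothesis $q>2$.
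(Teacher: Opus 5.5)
Your proof is correct. For sufficiency it takes a different route from the paper; necessity is the same Leibniz argument.

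\textbf{The paper's route.} The paper permutes $A$ so that its main diagonal is all ones. It then inducts along the nested leading principal submatrices $A_1,\dots,A_s$. At step $\nu$, the new off-diagonal entries of $A_\nu$ are labeled arbitrarily. Cofactor expansion gives $\det H(A_\nu)=\alpha x+\beta$ in the new diagonal label $x$, where $\alpha=\det H(A_{\nu-1})\neq 0$ by the induction hypothesis. One then picks a nonzero $x\neq-\beta/\alpha$, with $q=3$ treated separately.

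\textbf{Your route.} You pass to the symbolic (Edmonds-type) determinant. Your no-cancellation observation shows that properness is equivalent to $D\not\equiv 0$. A general nonvanishing lemma for multilinear polynomials on $S^N$ with $|S|\ge 2$ does the rest.

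\textbf{What each buys.} In the paper's version the leading coefficient is nonzero for structural reasons, so no cancellation argument is needed. Its one-label-at-a-time form is exactly what Corollary~\ref{col1} and Lemma~\ref{lemma2} reuse to get the per-step success probability $(q-2)/(q-1)$ for random labels. Your version has these advantages:
\begin{itemize}
\item it separates the matching combinatorics from the algebra;
\item it needs no diagonal normal form or nested chain;
\item it treats all $q\ge 3$ uniformly;
\item your $2\times 2$ example pinpoints why $q>2$ is required;
\item it feeds directly into Schwartz--Zippel: a uniformly random nonzero labeling is singular with probability at most $s/(q-1)$, since $D$ has total degree $s$.
\end{itemize}

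The two arguments are closely linked. The paper's proof is your induction with $x=x_{\nu\nu}$ split off at each stage, so that $D_1$ is the leading principal minor $\det H(A_{\nu-1})$.
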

The labeling $\mathcal M$ in Lemma \ref{lemma1} is called {\em proper}.
\begin{proof}
For a square matrix $A$, the necessity can be proved by applying the Leibniz formula \cite{strang2022introduction} to the determinant of the improper matrix. Since all terms in the determinant decomposition of the improper matrix are zero, then for any labeling $H(\mathcal M,A)$ is singular. The rectangular $s\times t$ improper matrix $A$ has at least one improper submatrix of order $s$, that is, for any labeling, there exists a set of $s$ linearly dependent columns in $H(\mathcal M,A)$.    

The proof of sufficiency for a square matrix can be done by induction. 
If $A$ is proper, then there exists a sequence of proper square submatrices $A_{\nu}$ of order $\nu=1,2,...,s$.  It is evident that the statement is true for $A_1$. Assume that the statement is true for the matrix of order $A_{\nu-1}$ from this sequence, that is, $\det(H(A_{\nu-1}))\ne 0$, and by column and row permutation $A_{\nu-1}$ is reduced to the equivalent form with a nonzero main diagonal. 

W.l.g we assume that in $A_{\nu}$ the submatrix consisting of the first $\nu-1$ columns and rows is $A_{\nu-1}$. Let $x$ be the label for an element in position $(\nu,\nu)$. Label the other nonzero elements of the last row and last column with arbitrary elements of the field.  
Then the determinant of $H(A_{\nu})$, is a linear form $\Delta(x)=\alpha x +\beta$ of argument $x$ for some $\alpha, \beta \in {\rm GF}(q)$, $\alpha\neq 0$. Zero determinant can be avoided by choosing $x\neq x_0=-\beta/\alpha$. In the case of $q=3$, we choose $x=2$ if $x_0=1$ and vice versa. 
\end{proof}

\begin{corollary} \label{col1}
Given a properly labeled square matrix $A_\nu$ of order $\nu$. Assume that some proper square submatrix 
$A_{\nu-1}$ of $A_\nu$ is properly labeled and all nonzero elements outside  $A_{\nu-1}$
are labeled by random nonzero elements of GF($q$). The probability that $A_{\nu-1}$ is properly labeled is at least 
$p_q=(q-2)/(q-1)$. 
\end{corollary}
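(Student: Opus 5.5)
The statement has a typo. As written, it says the probability that $A_{\nu-1}$ is properly labeled is at least $p_q$. But $A_{\nu-1}$ is properly labeled by hypothesis. The intended claim must be that $A_\nu$ becomes properly labeled, meaning $\det H(A_\nu)\neq 0$, with probability at least $p_q=(q-2)/(q-1)$. This holds when the labels outside $A_{\nu-1}$ are drawn independently and uniformly from ${\rm GF}(q)\setminus\{0\}$. I will prove this version, reusing the inductive step from the proof of Lemma~\ref{lemma1}.

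\textbf{Normalization.} Permute rows and columns so that $A_{\nu-1}$ occupies the first $\nu-1$ rows and columns. The extra row and column of $A_\nu$ then carry the random labels. Because $A_\nu$ is proper and $A_{\nu-1}$ is a proper submatrix chosen as in the proof of Lemma~\ref{lemma1}, position $(\nu,\nu)$ holds a nonzero entry of the base matrix. I will take this as part of the setting, since the sequence $A_1,\dots,A_s$ in Lemma~\ref{lemma1} is built exactly this way. Call its random label $x$, and call the remaining labels in the last row and column $\bs y$.

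\textbf{Reduction to a linear form.} Expand $\det H(A_\nu)$ along the last row, or equivalently use multilinearity in the $(\nu,\nu)$ entry. This gives $\Delta(x)=\alpha x+\beta$. Here $\alpha=\det H(A_{\nu-1})$, which is nonzero because $A_{\nu-1}$ is properly labeled. The constant $\beta$ depends only on $\bs y$ and the fixed labels of $A_{\nu-1}$, and not on $x$. Hence, conditioned on any value of $\bs y$, exactly one field value $x_0=-\beta/\alpha$ makes the determinant vanish.

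\textbf{Probability bound.} The label $x$ is uniform on the $q-1$ nonzero elements and independent of $\bs y$. So, conditioned on $\bs y$, the probability that $x=x_0$ is at most $1/(q-1)$. The probability is exactly $1/(q-1)$ when $x_0\ne 0$, and zero when $x_0=0$. Averaging over $\bs y$ gives $\Pr[\det H(A_\nu)\ne 0]\ge 1-1/(q-1)=(q-2)/(q-1)=p_q$.

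\textbf{Main obstacle.} The argument needs an entry of $A_\nu$ outside $A_{\nu-1}$ whose cofactor equals $\det H(A_{\nu-1})$. That is the diagonal position $(\nu,\nu)$ in the normalized form. If the statement is read as allowing an arbitrary proper $A_{\nu-1}$, this position might not carry a one. In that case I would expand along another nonzero entry of the extra row or column whose complementary minor is properly labeled. The same linear-form argument applies once the labels in that minor are fixed, but it requires that minor to be nonsingular. To stay safe, I would state the result in the nested-sequence setting of Lemma~\ref{lemma1}.
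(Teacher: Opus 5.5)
Your proof is correct and follows the paper's argument. The determinant is the linear form $\Delta(x)=\alpha x+\beta$ from the proof of Lemma~\ref{lemma1}, with $\alpha=\det H(A_{\nu-1})\neq 0$, so at most one of the $q-1$ equiprobable nonzero values of $x$ is bad; conditioning on $\bs y$ makes explicit what the paper means by calling $\alpha,\beta$ ``random.''

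Your reading of the typo (the conclusion is about $A_\nu$) is right. So is restricting to the nested setting where position $(\nu,\nu)$ carries a one: the paper's proof relies on the same setup, and without it the claim can fail. For example, if that corner entry is zero, the extra row and column each have a single one, and the corresponding cofactor of $H(A_{\nu-1})$ vanishes, then $\det H(A_\nu)=0$ for every labeling even though $A_\nu$ is proper.
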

\begin{proof} 
Consider the linear form $\Delta(x)=\alpha x +\beta$ in the proof of Lemma~\ref{lemma1}, where $\alpha$ and 
$\beta$ are random elements of GF$(q)$, $\alpha \ne 0$.  For random nonzero $x$, the random variable $\Delta(x)$ takes $q-1$ 
different values  from GF($q$) with equal probabilities $1/(q-1)$. At most one of these $(q-1)$ values is zero, and the other $q-2$ values are nonzero.  
\end{proof}
The next step is to extend the result of Lemma~\ref{lemma1} to rectangular matrices. To verify that the achievable minimum distance of the code is $d$, we analyze $s=d-1$-column submatrices of $B$ and check that each such submatrix has full rank after labeling. First we introduce the following definition. 

\begin{definition} \label{weak}
The binary matrix $A$ of size $r\times s$, $r\ge s$, is called weakly proper if at least one order $s$ square submatrix of $A$ is proper. The labeling $\mathcal M$ that makes columns of $A$ linearly independent is called a weakly proper labeling.
\end{definition}

\begin{lemma} \label{lemma2}
Let the binary matrix $A$ of size $r\times s$ be weakly proper, and its submatrix 
of size $r\times (s-1)$ is weakly properly labeled. Then, the random labeling of the $s$-th column  
provides a weakly proper labeling for $A$ with the probability at least $p_q=(q-2)/(q-1)$. 
\end{lemma}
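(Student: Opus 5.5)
The plan is to argue in the column space of the labeled matrix. Let $\bs h_1,\dots,\bs h_{s-1}\in{\rm GF}(q)^r$ be the already labeled first $s-1$ columns of $H(\mathcal M,A)$. By assumption they are linearly independent, so they span a subspace $V$ with $\dim V=s-1$. Let $S$ be the support of the $s$-th column of $A$, with $m=|S|$. The random $s$-th column is then $\bs h_s=\sum_{j\in S}x_j\bs e_j$, where the $x_j$ are independent uniform nonzero labels. The labeling fails to be weakly proper exactly when $\bs h_s\in V$. So it suffices to show $\Pr(\bs h_s\in V)\le 1/(q-1)$.

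\emph{Reduction to a linear form.} Suppose there is some $j_0\in S$ with $\bs e_{j_0}\notin V$. Choose a linear functional $\varphi$ that vanishes on $V$ and satisfies $\varphi(\bs e_{j_0})\neq 0$. Condition on all labels $x_j$ with $j\neq j_0$. Then $\varphi(\bs h_s)=\alpha x_{j_0}+\beta$ with $\alpha=\varphi(\bs e_{j_0})\ne 0$. This is exactly the situation of Corollary~\ref{col1}: at most one of the $q-1$ nonzero values of $x_{j_0}$ makes this form vanish. Since $\bs h_s\in V$ forces $\varphi(\bs h_s)=0$, the conditional failure probability is at most $1/(q-1)$. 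Averaging over the other labels gives success probability at least $p_q=(q-2)/(q-1)$.

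\emph{Main step: excluding $\bs e_j\in V$ for all $j\in S$.} I expect this to be the main obstacle, and it is where weak properness of $A$ enters. Assume for contradiction that $\bs e_j\in V$ for every $j\in S$, i.e.\ ${\rm span}\{\bs e_j: j\in S\}\subseteq V$.
\begin{itemize}
\item Project onto the coordinates outside $S$. The image of $V$ has dimension $s-1-m$. Hence the $(r-m)\times(s-1)$ submatrix of the first $s-1$ labeled columns, restricted to rows outside $S$, has rank $s-1-m$.
\item By the Leibniz-formula argument in the necessity part of Lemma~\ref{lemma1}, the rank of any labeled matrix is at most the size of a largest matching in its binary base matrix (the term rank). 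So in the base matrix restricted to rows outside $S$ and the first $s-1$ columns, at most $s-1-m$ columns can be matched to distinct rows.
\item Since $A$ is weakly proper, some order-$s$ square submatrix is proper. This gives a matching of all $s$ columns of $A$ to distinct rows. Column $s$ must be matched into $S$. Of the first $s-1$ columns, at most $s-1-m$ are matched outside $S$, so at least $m$ are matched into $S$.
\item Thus at least $m+1$ columns are matched into the $m$ rows of $S$, a contradiction.
\end{itemize}
Therefore some $j_0\in S$ has $\bs e_{j_0}\notin V$, and the previous paragraph completes the proof.

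The only point needing care is the term-rank inequality for rectangular matrices. It follows by applying the Leibniz argument of Lemma~\ref{lemma1} to every square submatrix whose order exceeds the maximum matching size: each such submatrix is improper, so its determinant vanishes for every labeling.
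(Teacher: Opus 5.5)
Your reduction step is correct: if some $j_0\in S$ has $\bs e_{j_0}\notin V$, the functional $\varphi$ bounds the conditional failure probability by $1/(q-1)$. This is the coordinate-free form of the cofactor argument in Corollary~\ref{col1}.

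The gap is in the main step. The term-rank inequality you justify says that the rank of any labeling is at most the size of the largest matching in its support. From ${\rm rank}=s-1-m$ it therefore gives only that the base matrix on the rows outside $S$ and the first $s-1$ columns has a matching of size \emph{at least} $s-1-m$. You conclude that \emph{at most} $s-1-m$ of these columns can be matched to distinct rows. That needs the reverse inequality, term rank $\le$ rank, which fails for particular labelings: the all-ones $2\times 2$ pattern labeled with all ones has rank $1$ and term rank $2$. So no contradiction with the matching of $A$ is obtained.

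This cannot be patched, because the configuration you want to exclude does occur under the stated hypotheses, and the lemma itself fails. Take $r=s=3$, $q\ge 3$, $a\in{\rm GF}(q)\setminus\{0,1\}$, and
\[
A=\begin{pmatrix}1&1&1\\1&1&0\\1&1&0\end{pmatrix},\qquad
H=\begin{pmatrix}1&a&x\\1&1&0\\1&1&0\end{pmatrix}.
\]
Then:
\begin{itemize}
\item $A$ is proper: match column $3$ to row $1$ and columns $1,2$ to rows $2,3$.
\item The labeled columns $\bs h_1,\bs h_2$ are independent, so the first two columns are weakly properly labeled.
\item Nevertheless $\bs e_1=(\bs h_1-\bs h_2)/(1-a)\in V$.
\item Rows $2$ and $3$ of $H$ coincide, so the columns are dependent for every $x$. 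The success probability is $0<p_q$.
\end{itemize}
The restriction to rows $2,3$ and columns $1,2$ is exactly the rank-$1$, term-rank-$2$ pattern above.

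What is actually needed is an extra hypothesis: some $\bs e_{j_0}$ with $j_0\in S$ lies outside $V$. Equivalently, the labeled first $s-1$ columns keep rank $s-1$ after deleting some row in the support of column $s$. Under that hypothesis your reduction step alone is a complete proof.

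The paper's own proof has the same hole. Its ``w.l.g.'' assumption is that the first $s-1$ columns are independent on the rows of a proper $s\times s$ submatrix, and this holds in the example. But invoking Corollary~\ref{col1} requires a nonzero $(s-1)$-cofactor at a nonzero entry of column $s$. In the example the only such cofactor is $\det\begin{pmatrix}1&1\\1&1\end{pmatrix}=0$.
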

\begin{proof}
W.l.g., assume that the set of the first $s$ rows of $A$ is a proper matrix and its first $s-1$ columns after labeling are linearly independent. From Corollary \ref{lemma1} follows that 
random labeling of the first $s$ elements of the $s$-th column makes the order $s$ submatrix of 
$A$ non-degenerate with probability $p_q$. This guarantees the rank $s$ for the entire matrix with probability~$p_q$.
\end{proof}

\begin{example} 
Consider three base matrices shown in Fig.~\ref{fig:Tanner}.
By Definition \ref{proper}, $B_1$ is improper since the submatrix $B_1'$  consisting of the last two columns is improper. The other two matrices, $B_2$ and $B_3$ are proper. 
In terms of Definition~\ref{weak}, the submatrix $B_1'$ is not weakly proper: its columns are linearly dependent 
for any labeling. Thus, $d_{\rm u}(B_1)=2$. After a proper labeling,  the corresponding minimum distances $d_{\rm q}$ achieve ultimate distances
 $d_{\rm q}(B_2)=d_{\rm u}(B_2)=3$, $ d_{\rm q}(B_3)=d_{\rm u}(B_3)=~3$. 
 \end{example}

The following theorem suggests an efficient algorithm for finding the ultimate distance $d_{\rm u}$ for a given base matrix of the code.

\begin{theorem} \label{thm1}
 For an $r\times n$ base matrix $B$ with $r\leq n$, we have $d_{\rm u}(B)=s+1$, where $s$ is the smallest integer such that $B$  is equivalent to 
 \begin{equation} \label{eq:du}
    B'=\left(
    \begin{array}{c|c}
    A & U \\ \hline
    \bs 0 & V
          \end{array}
            \right),
    \end{equation}
   and $A$ is an $s\times (s+1)$ matrix.
Moreover, $A$ is proper and there exists a $q$-ary labeling ${\mathcal M}$ such that the code with parity-check matrix $H({\mathcal M},B)$ achieves minimum distance $d_{\rm u}(B)$ for any  $q>1+ {n-1\choose s-1}$. 
\end{theorem}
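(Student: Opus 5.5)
The plan is to characterize $s$ combinatorially through Hall's condition on the Tanner graph of $B$. For a set $I$ of columns, $a(I)$ is the number of active rows of $B_I$ (Definition~\ref{active}). A decomposition (\ref{eq:du}) with $A$ of size $s\times(s+1)$ exists iff there is a set $I$ of $s+1$ columns with $a(I)\le s$: permute those columns to the front and their active rows to the top. If needed, pad with inactive rows to get exactly $s$ rows, which is possible since $r\le n$.

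\emph{Upper bound.} For such an $I$, in any labeling $\mathcal M$ over any $q$, the $s+1$ columns of $H(\mathcal M,B)_I$ have all their nonzero entries in the same $a(I)\le s$ rows. They are therefore linearly dependent, which gives a nonzero codeword of weight at most $s+1$. Hence $d_{\rm q}\le s+1$ for all $q,\mathcal M$, so $d_{\rm u}(B)\le s+1$.

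\emph{Hall property from minimality.} If some set $T$ of $t\le s$ columns had $a(T)<t$, then a subset of $T$ of size $a(T)+1\le s$ would also span at most $a(T)$ rows. This would give a decomposition of the form (\ref{eq:du}) with a smaller $s$, a contradiction. So every set of at most $s$ columns spans at least as many rows as its size. By Hall's theorem (\cite[Th.~5.1]{LW01}), every $r\times t$ submatrix $B_I$ with $t\le s$ has a matching saturating its columns, i.e.\ it is weakly proper (Definition~\ref{weak}).

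Applied to any $s$ columns of $A$: their active rows lie among the $s$ rows of $A$. Hall's condition on these columns therefore gives a perfect matching in the corresponding $s\times s$ submatrix of $A$, so every order-$s$ submatrix of $A$ is proper, and $A$ is proper.

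\emph{Labeling achieving $s+1$.} It remains to find, for $q>1+\binom{n-1}{s-1}$, a labeling in which every $s$ columns of $H$ are independent. Sets of fewer than $s$ columns are then independent automatically, and we get $d_{\rm q}\ge s+1$. I would label the columns one at a time, maintaining the invariant that every set of at most $s$ already-labeled columns is independent. When column $j$ is labeled, consider each set $S$ of $s-1$ earlier columns (fewer than $\binom{n-1}{s-1}$ of them; smaller sets are subsets of these or trivial).

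Let $W_S$ be the set of vectors supported on ${\rm supp}(j)$ that lie in the span of the already-labeled columns $S$. The goal is to show $W_S$ is a proper subspace of ${\rm GF}(q)^{{\rm supp}(j)}$. By Hall, $S\cup\{j\}$ has a saturating matching. Restrict to the $s$ matched rows and apply Lemma~\ref{lemma2}, or equivalently the argument of Lemma~\ref{lemma1} and Corollary~\ref{col1}. Expanding the determinant along column $j$, its coefficient at the matched entry of column $j$ is an $(s-1)$-minor of the labeled columns $S$. For a proper subspace, it suffices that this minor is nonzero, or, more robustly, that the restriction of the columns of $S$ to the rows outside ${\rm supp}(j)$ does not already force the span to contain all vectors supported on ${\rm supp}(j)$.

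This is the step I expect to be delicate. The clean way is dimension counting: ${\rm supp}(j)$ has $|{\rm supp}(j)|\ge 1$ coordinates. The columns in $S$ that can contribute to $W_S$ must, by Hall applied to $S\cup\{j\}$, be unable to cover all of ${\rm supp}(j)$, so $\dim W_S<|{\rm supp}(j)|$.

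Given that $W_S$ is proper, it lies in a hyperplane $\sum a_i x_i=0$ with some $a_{i_0}\ne0$. Among all-nonzero vectors on ${\rm supp}(j)$, fixing the other coordinates leaves at most one bad value of $x_{i_0}$. So at most a fraction $1/(q-1)$ of the $(q-1)^{|{\rm supp}(j)|}$ candidate labelings of column $j$ are bad for $S$. The union bound over fewer than $\binom{n-1}{s-1}+1$ sets $S$ leaves a good labeling whenever $q-1>\binom{n-1}{s-1}$. This keeps the invariant; induction over the $n$ columns then yields $d_{\rm q}=s+1=d_{\rm u}(B)$.
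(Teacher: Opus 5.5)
Your upper bound, your derivation of Hall's condition for every set of at most $s$ columns from the minimality of $s$, and your proof that $A$ is proper are correct. (A small point: padding $A$ with inactive rows needs $s\le r$. This holds because for $n>r$ any $r+1$ columns span at most $r$ rows; $r\le n$ by itself is not the reason.)

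The gap is the claim that $W_S$ is a proper subspace. Hall's condition only constrains the support pattern, i.e.\ generic labels. $W_S$ depends on the particular labels already chosen for $S$, and your invariant (every set of at most $s$ labeled columns is independent) does not rule out cancellations on the rows outside ${\rm supp}(j)$. Concretely, take
\[
B=\begin{pmatrix}1&0&1&1\\ 1&1&0&1\\ 1&1&0&0\end{pmatrix}.
\]
Every $t\le 3$ columns span at least $t$ rows and all four span three, so $s=3$. Labeling columns 1 and 2 by $(1,1,1)^{\rm T}$ and $(0,1,1)^{\rm T}$ respects your invariant, but their difference is $e_1$. Hence for $S=\{1,2\}$ and $j=3$ (support $\{1\}$) you get $W_S={\rm GF}(q)^{\{1\}}$, so every labeling of column 3 makes columns $1,2,3$ dependent, even though $\{1,2,3\}$ satisfies Hall. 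The $(s-1)$-minor you named as sufficient is columns 1,2 on rows 2,3 (forced, since column 3 can only be matched to row 1), and it vanishes; nothing in your induction prevents this.

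Invoking Lemma~\ref{lemma2} does not close the step. Its ``w.l.g.'' assumes the labeled $s-1$ columns remain independent on the rows of the proper $s\times s$ block, which is exactly the unjustified point. The example above contradicts that lemma as stated, and the paper's own column-by-column argument relies on it as well.

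The repair is to carry polynomials, not ranks, through the induction.
\begin{itemize}
\item For each $s$-set $T$ of columns, fix a matching of $B_T$ (Hall). Let $D_T$ be the $s\times s$ minor of $H_T$ on the matched rows, viewed as a polynomial in the labels.
\item Distinct permutations in the Leibniz expansion give distinct monomials, so the matching term survives and $D_T\not\equiv 0$. Also, $D_T$ has degree at most one in each label.
\item In $P=\prod_T D_T$, a label of column $j$ occurs only in the $\binom{n-1}{s-1}$ factors with $j\in T$. So $P$ has degree at most $d=\binom{n-1}{s-1}$ in every variable.
\item Assign the labels one at a time, keeping the partially evaluated polynomial nonzero. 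Written as a polynomial in the remaining variables, it has some coefficient that is a nonzero univariate polynomial of degree at most $d$ in the variable being fixed. Hence at most $d$ of the $q-1$ nonzero values are forbidden at each step.
\end{itemize}
This succeeds whenever $q-1>\binom{n-1}{s-1}$, which is exactly the stated threshold. At the end every $D_T\neq 0$, so every $s$ columns are independent. In other words, your sequential scheme works once its invariant is ``$P$ with the labels chosen so far substituted is still a nonzero polynomial,'' rather than ``small sets of labeled columns are independent.''
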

\begin{proof}
If $B$ is equivalent to a matrix $B'$ of the form~(\ref{eq:du}), where $A$ is of size $s\times (s+1)$, then for any labeling ${\mathcal M}$, the first $s+1$ columns of~$H({\mathcal M},B)$ are dependent, hence the corresponding code has minimum distance at most $s+1$. 
Next, we prove achievability 

Now let $s$ be the minimal value. 

\begin{figure}
    \centering
    \includegraphics[width=1.0\linewidth]{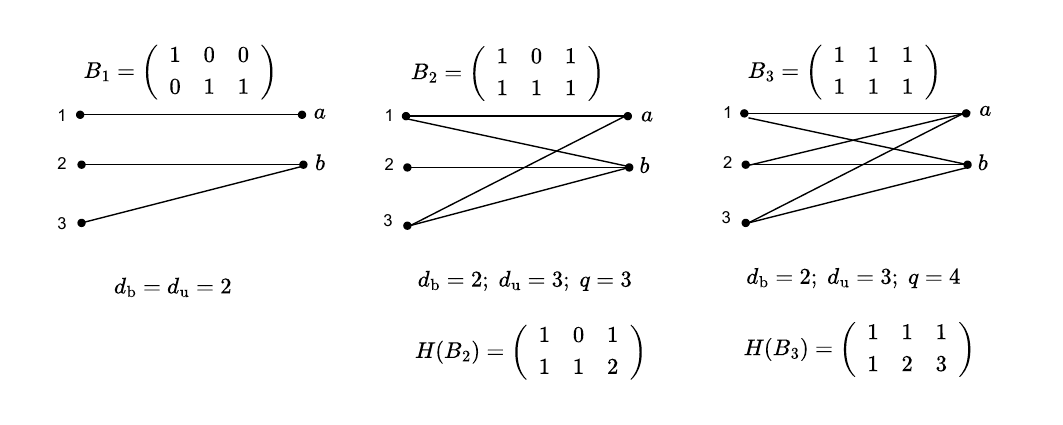}
    \caption{Tanner graphs for proper and improper base matrices}
    \label{fig:Tanner}
\end{figure}

Assume that $A$ is improper, $\mathcal {C}$ and $\mathcal V$ are sets of check and variable nodes of the Tanner corresponding graph. 
It follows from Hall's marriage theorem \cite[Th. 5.1]{LW01} that there exists a set $V\subset \mathcal{V}$ such that $|\Gamma(V)|<|V|$, where $\Gamma(V)\subset \mathcal C$ is the set of nodes adjacent to $V$. 
In Fig.~\ref{fig:Tanner}a, $s=2$, $V=\{2,3\}$, $|V|=2$, $\Gamma(V)=\{b\}$, $|\Gamma(V)|=1$.
The biadjacency matrix $A'$ of the subgraph with $s'=|\Gamma(V)|=1$ nodes from $\mathcal C$ and $t>s'$ nodes of $V$ is proper. The elements in positions $\left\{(i,j)\;|\;i\in \mathcal{C}\setminus \Gamma(V),j\in V\right\}$ are zero. 
The existence of such $A'$ contradicts to the assumption of minimality of $s$. In Fig.~\ref{fig:Tanner}a, $A'=(1,1)$ is biadjacency matrix of the subgraph with $\mathcal V'=\{2,3\}$ and $\mathcal C'=\{b\}$.       



Let $J\subset\{1,...,n\}$, $|J|=s+1$, and $a(J)= s$ (see Def.~\ref{active}). Let $A_J$ be the $s\times(s+1)$ submatrix of active rows in the $r\times (s+1)$ matrix $B_J \subset B$. 
By permuting columns indexed by $J$ and active rows of $A_{J}$ to the first $s+1$ and $a(J)$ positions, respectively, we obtain  $B'$ equivalent to $B$ in the form \eqref{eq:du} with $A=A_J$.  
Thereby, we proved that $A$ in \eqref{eq:du} is proper.

The final step is to estimate a lower bound on the alphabet size required for weakly proper labeling of $B$. 

The first $t=s$ columns of $B$ are labeled as in Corollary~\ref{col1} with success probability $p_q$. Next, for $t=s+1,...,n$, we label a column $t$ such that each $s-1$-subset from the first 
$t-1$ columns together with the $t$-th column is weakly properly labeled. The number of subsets to be checked is at most $N_t=\binom{t-1}{s-1}$. 
The success probability at step $t$ is the probability that the newly labeled column will be successful for all $N_t$ cases. Thus, the failure probability 
\begin{equation} \label{label} 
P_t=1-p_q^{N_t}=1-\left( 1-\frac{1}{q-1} \right)^{N{_t}}
\end{equation}
Let  $q=\ell N_t+1$, $\ell \ge 1$, then
\[
P_t=1-\left( 1-\frac{1}{\ell N_t} \right)^{N{_t}} \xrightarrow[N_t\to\infty]{} 1-e^{-1/\ell} \xrightarrow[\ell\to\infty]{}0.
\]
Thus, for the last step $t=n$, the failure probability $P_n<1$ if $q\ge N_n+1=\binom{n-1}{s-1}+1$, and $P_n$ tends to zero if $q\to \infty$.
\end{proof}

The following notion is useful for applying Theorem~\ref{thm1} to the analysis and construction of NB LDPC codes. 
\begin{definition}\label{StSet}
Given a binary LDPC code of length $n$ with parity-check matrix $H$, a subset $S\subseteq \{1,...,n\}$ is a {\em stopping set\/} if the submatrix $H_I$ contains no rows of weight one. 
 A stopping set {\color{blue} $S$} is called {\em minimal\/} if none of its subsets of smaller size is a stopping set.
\end{definition}

\begin{corollary} \label{col2}
 In the $r \times n$ base matrix $B$, all subsets of column indices $J \subseteq \{1,...n\}$ of size $|J|=d _{\rm u}$, such that the submatrix $A$ of active rows of $B_J$ is proper, are stopping sets.   
\end{corollary}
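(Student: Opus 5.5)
The plan is to argue by contradiction, using only the minimality of $d_{\rm u}(B)$ expressed in Theorem~\ref{thm1} and a rank count. I read the statement in the setting of Theorem~\ref{thm1}: $|J|=d_{\rm u}=s+1$ and the active-row submatrix $A$ of $B_J$ has size $a(J)\times|J|=s\times(s+1)$. This is the situation produced in the proof of that theorem. I use properness of $A$ only to know that we are in this extremal configuration, i.e., that $J$ realizes the dependency giving $d_{\rm u}$.

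Suppose $J$ is not a stopping set. By Definition~\ref{StSet}, some row of $B_J$ has weight one. Such a row is nonzero, so it is active and is a row of $A$; let its unique $1$ lie in column $j\in J$. Put $J'=J\setminus\{j\}$, so $|J'|=s$. The chosen row is zero on $J'$, and every row that is zero on $J$ is also zero on $J'$. Hence the active rows of $B_{J'}$ are among the active rows of $B_J$ other than the chosen one, giving $a(J')\le a(J)-1=s-1$.

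Now fix any $q$ and any labeling $\mathcal M$. The columns of $H(\mathcal M,B)$ indexed by $J'$ are supported on at most $s-1$ rows. Therefore these $s$ columns have rank at most $s-1<|J'|$ and are linearly dependent, so some nonzero codeword has support inside $J'$. This gives $d_{\rm q}(H(\mathcal M,B))\le s=d_{\rm u}-1$ for every $q$ and every $\mathcal M$. Taking the maximum over $q$ and $\mathcal M$ yields $d_{\rm u}(B)\le d_{\rm u}(B)-1$, a contradiction. Hence every row of $B_J$ has weight $0$ or at least $2$, and $J$ is a stopping set.

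The main obstacle is the interpretation of the hypothesis. If one allowed $a(J)\ge |J|$, deleting column $j$ would not force a rank deficiency, and the counting above would break down. So the step I would state carefully is the reduction to $a(J)=|J|-1$, which is exactly the shape of the block $A$ in~\eqref{eq:du}. If a purely combinatorial route is preferred, there is an alternative. Padding the at most $s-1$ active rows of $B_{J'}$ with further rows, or shrinking $J'$ via Hall's theorem as in the proof of Theorem~\ref{thm1}, gives a block of the form~\eqref{eq:du} with $A$ of size $t\times(t+1)$ and $t<s$. This contradicts the minimality of $s$ in Theorem~\ref{thm1}.
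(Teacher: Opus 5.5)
Your argument is correct and is essentially the paper's: remove the weight-one row together with the column holding its unique $1$, which leaves $d_{\rm u}-1$ columns with at most $d_{\rm u}-2$ active rows and contradicts the minimality of $d_{\rm u}$. You close the contradiction with a direct rank count over all labelings, whereas the paper notes that the reduced matrix is proper and appeals to the minimality of $s$ in Theorem~\ref{thm1}; your reading $a(J)=|J|-1$ is exactly what the paper's proof assumes implicitly (properness of $A$ alone does not supply it), so stating it as a hypothesis is the right call.
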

\begin{proof}
We have to show that $A$ cannot have rows of weight one. W.l.g,  
let the first row of the proper matrix $A$ be of weight one and contain the only nonzero element at the first position. Deleting the first row and the first column, we obtain the proper matrix with 
$d_{\rm u}-1$ columns. It contradicts the assumption that $d_{\rm u}$ is the ultimate distance.
\end{proof}

By Theorem~\ref{thm1} and Corollary~\ref{col2}, $d_{\rm u}$ is the minimum size of such {stopping set $S$ that $a(S)<|S|$. Next, we present an algorithm for finding $d_{\rm u}$ that uses solely the code base matrix~$B$.  

{\bf Algorithm $d_{\rm u}$}:
\begin{enumerate}
    \item Input: Base matrix $B$.
    \item
For each $s=2,3,..,n$, for each $J\in \{1,...,n\}$, $|J|=s$, if $J$ is a stopping set, and if $a(J)<s$, set $d_{\rm u}=s$.  Stop.
\item Return: The ultimate distance $d_{\rm u}$. 
\end{enumerate}

The number of iterations is exponential with $n$, but the iterations are very simple. From \cite{krishnan2007computing} 
follows that the polynomial complexity algorithm for finding stopping sets does not exist. Since we are interested only in short high-rate codes, the search complexity for practical length $n$ is not too large.  

To search for good NB codes, we need to find the minimum distance for different labelings of the same base matrix. In the following algorithm, we search for the minimum distance, assuming that collections $\mathcal{J}_w$ of stopping sets of weight $w \le d_{\rm u}$ for a given base matrix $B$ are known. 
(Algorithm $d_{\rm u}$ exhaustively searches over all subsets of weight up to $w\le d_{\rm u}$ and therefore it can be used for finding all $\mathcal{J}_w$).

{\bf Algorithm} $d_{\rm q}$
\begin{enumerate}
    \item Inputs: $B$, labeling $\mathcal M$, ultimate distance $d_{\rm u}$.
   \item for $w=2,...,d_{\rm u}$,
        \begin{enumerate}
            \item For all $J\in\mathcal{J}_w$, 
            compute the rank $p={\rm rank}(H_{J})$. If $p=w-1$, set $d_{\rm q}=w$. Stop.  
                \end{enumerate}
     \item Return $d_{\rm q}$           
\end{enumerate}
The complexity of verifying the minimum distance for a certain labeling is proportional to $|\mathcal{J}_{d_{\rm u}}|$, which is, in practice, much less than  the number $N_n$ of tested submatrices 
in the proof of Theorem~\ref{thm1}.

The complexity of the search for stopping sets of weight up to $d_{\rm q}$ is upper-bounded by $\binom{n}{d_{\rm q}}$. The complexity of computing the rank of the submatrix over GF($q$) does not exceed $r^3\log^2(q)$. Thus, the complexity exponent of the algorithm is $\log \binom{n}{d_{\rm q}}$, i.e., it
is of the same order as it would be for binary codes.  

Some numerical results are given in Section~\ref{examples}.

\section{Random Coding bounds on $b_{\nu}$}
In this section, we first revisit known upper bounds on the number of incomplete rank submatrices $H_{I}$ for the ensembles of $q$-ary LDPC codes. 
Then the new bound for the Gallager ensemble of $q$-ary LDPC codes is derived.

Most of the results valid for binary LDPC codes can be generalized to $q$-ary LDPC codes. Having an ensemble of parent binary codes, we obtain an ensemble of $q$-ary LDPC codes, replacing the nonzero elements of base parity-check matrices $B$ in the binary ensemble with random nonzero elements of GF($q$).

In \cite{liva2013bounds}, the ensemble of $q$-ary LDPC codes determined by $B$  with i.i.d. entries which take on zero value  with probability $1-p$ and $q-1$ nonzero values with probability $p/(q-1)$ each,  
$p=J/r=K/n$ (ensemble G in \cite{litsyn2002ensembles}) was analyzed.
From \cite[Theorem 1]{liva2013bounds} follows an estimate on $b_{\nu}$:
\begin{equation} \label{liva}
    b_{\nu}\le \frac{\binom{n}{\nu}}{q-1}    
    \sum_{t=1}^\nu (q-1)^t \binom{\nu}{t} 
    \left( \frac{q-1}{q}  \left(1-\frac{pq}{q-1}  \right)^t+\frac{1}{q} \right )^r
\end{equation}
The weight enumerator-based spectra
bound 
for $q$-ary codes gives (see \cite[Theorem 3]{liva2013bounds})
\begin{equation} \label{gen_spectrQ}
b_\nu \le   \binom{n}{\nu} \min \left\{1 , \frac{1}{q-1}
\sum_{w=1}^\nu S_w \frac{\binom{\nu}{w}}{\binom{n}{w}} 
 \right\}.
\end{equation}

To compute the weight enumerators $\{S_w\}$ in \eqref{gen_spectrQ} for $q$-ary LDPC codes whose base matrices belong to the Gallager ensemble of $(J,K)$-regular $q$-ary LDPC codes of length $n$ (ensemble B in \cite{litsyn2002ensembles})
we can use formulas from~\cite[Section~5]{gallager}:
\begin{equation} 
S_w=\left[(q-1)^w\binom{n}{w} \right]^{1-J} \left( \left[g(s)^{r/J}\right]_w \right)^J,
\end{equation}
where $[f(s)]_w$ denotes $w$-th coefficient in the series expansion of $f(s)$, and
\begin{equation}
g(s)=
\frac{1}{q}(1+(q-1)s)^{K}+\frac{q-1}{q}(1-s)^{K}  
\end{equation}
is the weight generating function of the $q$-ary sequences  of
length $n$ satisfying the nonzero part of one $q$-ary parity-check equation.
 
Next, we derive a new bound that is independent of the alphabet size. 
Erasure combinations in $B_{I}$ of weight one can immediately be corrected by the belief propagation (BP) decoder. We exclude such rows of $B_{I}$ from the consideration.  

Erasure combinations covering stopping sets of $B_{I}$ are not correctable by the BP decoder, but the ML decoder can correct some of them. For an erasure combination $I$, the ML decoder solves the system of linear equations \eqref{main1}.
To check whether $I$ is a stopping set, it is enough to consider only the base submatrix $B_I$. Some stopping sets are recoverable by the ML decoder, and some are not. 
The following lemma specifies the cases when $I$ is not recoverable, i.e.  \eqref{main1}  has multiple solutions independently of labeling $B$.

\begin{lemma} \label{lm:UDcondition}
For some proper labeling  $\mathcal M$ of $B$, 
if $B_{I}$ is a proper matrix, then the stopping set $I$ is uniquely correctable by ML decoder if $a(I) \ge |I|$. If $a(I)<|I|$, the unique decoding is impossible regardless of whether  $B$  is proper or not.
\end{lemma}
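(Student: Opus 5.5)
The lemma has two parts, and I would prove them separately. Throughout, the key fact is that for an erasure pattern $I$, only the $a(I)$ active rows of $B_I$ matter. Every other row of $H_I$ is identically zero for every labeling. Hence $\mathrm{rank}(H_I)\le a(I)$ always, and the rank of $H_I$ equals the rank of the $a(I)\times |I|$ submatrix $\tilde H_I$ of active rows. Unique decodability is equivalent to $\mathrm{rank}(H_I)=|I|$, i.e.\ full column rank of $\tilde H_I$.

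\emph{Impossibility when $a(I)<|I|$.} This part is immediate from the observation above. For any $q$ and any labeling $\mathcal M$,
\[
\mathrm{rank}(H_I)=\mathrm{rank}(\tilde H_I)\le a(I)<|I|,
\]
so $I\in\mathcal B_{|I|}$. The system \eqref{main1} then has $q^{|I|-\mathrm{rank}(H_I)}>1$ solutions. Nothing about properness of $B$ or of the labeling enters, which gives the ``regardless'' clause. Equivalently, after permuting rows and columns, $B$ takes the form \eqref{eq:du} with the columns of $I$ first and the active rows on top. The first-part argument of Theorem~\ref{thm1} then applies verbatim: a wide block $A$ above a zero block forces dependent columns.

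\emph{Correctability when $a(I)\ge |I|$ and $B_I$ is proper.} Here I read ``$B_I$ proper'' in the sense of Definition~\ref{proper} for rectangular matrices. Since $B_I$ has $r\ge |I|$ rows, every order-$|I|$ square submatrix of it is proper. In particular, $B_I$ is weakly proper (Definition~\ref{weak}), and any such proper square submatrix must consist of active rows, which is consistent with $a(I)\ge|I|$. By Lemma~\ref{lemma1}, this proper $|I|\times|I|$ submatrix admits a labeling making it nonsingular, so $H_I$ has full column rank $|I|$ and \eqref{main1} has a unique solution.

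\emph{Labeling of the whole matrix.} To match ``for some proper labeling $\mathcal M$ of $B$'', I would invoke the labeling built in Theorem~\ref{thm1}. For $q>1+\binom{n-1}{s-1}$, that labeling is obtained column by column via Lemma~\ref{lemma2}, and it makes every weakly proper column subset of size up to the relevant size linearly independent. So one fixed $\mathcal M$ works simultaneously for all such $I$.

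The main obstacle is exactly this simultaneity. Theorem~\ref{thm1} only guarantees independence for subsets of size $s=d_{\rm u}-1$, while here $|I|$ may be larger. I would handle it by rerunning the counting in the proof of Theorem~\ref{thm1} with $N_t=\binom{t-1}{|I|-1}$, or summing over all sizes, and choosing $q$ large enough that the union bound on failure probability stays below $1$. If one only needs existence for a single $I$, Lemma~\ref{lemma1} alone suffices, and the statement follows directly.
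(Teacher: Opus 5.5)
Your proof of both halves is correct and is essentially the paper's. In $H_I$ the inactive rows are zero for every labeling, so $\mathrm{rank}(H_I)\le a(I)<|I|$; since the true codeword solves \eqref{main1}, the solution is not unique. When $a(I)\ge|I|$, a proper $|I|\times|I|$ block of active rows, labeled to be nonsingular by Lemma~\ref{lemma1}, gives full column rank. The paper's proof claims only this per-$I$ existence, which is your fallback in the last sentence, and like you it uses only weak properness of $B_I$.

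Your extra step, a single labeling that works for all $I$ at once, is not in the paper, and the route you sketch for it fails. Rerunning the column-by-column count of Theorem~\ref{thm1} rests on the bound $p_q$ of Lemma~\ref{lemma2}. That bound does not hold when the earlier columns are merely linearly independent.

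For a counterexample, take rows $a,b,c$ and three columns with supports $\{a,b,c\}$, $\{b,c\}$, $\{a\}$. This $3\times 3$ pattern is proper, since the first column can be matched to $c$, the second to $b$, and the third to $a$. The labels $(1,1,1)^{\rm T}$ and $(0,1,1)^{\rm T}$ make the first two columns independent. Yet the third column $(x,0,0)^{\rm T}$ is $x$ times their difference for every nonzero $x$, so the success probability is $0$, not $p_q$.

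If you want a uniform labeling, there are two sound routes:
\begin{itemize}
\item Carry the stronger invariant that every proper square minor among the already-labeled columns is nonsingular. Expanding a proper minor along the new column then gives a nonzero linear form in that column's labels, because its matched entry has a nonsingular cofactor by the invariant. Such a form vanishes with probability at most $1/(q-1)$, and a union bound over these minors finishes the argument.
\item Pick one proper minor per $I$ and apply a Schwartz--Zippel argument to it, with labels drawn from ${\rm GF}(q)\setminus\{0\}$.
\end{itemize}
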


\begin{proof}


If $a(I)\ge|I|$ we can find among active rows a square submatrix for which there exists a labeling which provides   
rank$(H_I(B_{I}))\ge|I|$ that guarantees unique solution of \eqref{main1}. 

In the case $a(I)<|I|$, the rank of $H_I(B_I)$ is smaller than the number of unknowns. The solution either does not exist or is not unique. Since at least one solution exists (correct codeword), there are multiple solutions of the system \eqref{main1}. 
\end{proof}

\begin{corollary}
\label{improp}
If $B_{I}$ is improper, then the set $I$ is not a minimal stopping set but it includes the smaller stopping set as a subset.        
\end{corollary}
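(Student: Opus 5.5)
The plan is to interpret ``improper'' for the rectangular matrix $B_I$ through Hall's theorem. Assume $I$ is a stopping set with $a(I)\ge |I|$; otherwise the statement is handled as in Lemma~\ref{lm:UDcondition}. Regard the active rows of $B_I$ as check nodes $\mathcal C_I$ and the columns indexed by $I$ as variable nodes. If $B_I$ is improper, then no complete matching from the variable nodes $I$ into $\mathcal C_I$ exists. By Hall's marriage theorem \cite[Th.~5.1]{LW01}, as already used in the proof of Theorem~\ref{thm1}, there is a nonempty subset $V\subseteq I$ with $|\Gamma(V)|<|V|$.

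Among all such deficient sets we choose one, $V^*$, of minimum cardinality. Minimality gives the following. For every $v\in V^*$, the set $V^*\setminus\{v\}$ satisfies Hall's condition, so $|\Gamma(V^*\setminus\{v\})|\ge |V^*|-1\ge|\Gamma(V^*)|$. Since $\Gamma(V^*\setminus\{v\})\subseteq\Gamma(V^*)$, these neighborhoods coincide. Hence every check in $\Gamma(V^*)$ has at least one neighbor in $V^*\setminus\{v\}$, for every choice of $v$. Now suppose some check $c\in\Gamma(V^*)$ had exactly one neighbor $v$ in $V^*$. Then $c\notin\Gamma(V^*\setminus\{v\})$, which is a contradiction. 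Therefore every active row of $B_{V^*}$ has weight at least two, so $V^*$ is a stopping set by Definition~\ref{StSet}. Moreover, $a(V^*)=|\Gamma(V^*)|<|V^*|$, so $V^*$ is exactly the kind of uncorrectable stopping set described in Lemma~\ref{lm:UDcondition}.

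It remains to show that $V^*\subsetneq I$, so that $I$ properly contains a smaller stopping set and is therefore not minimal. The case to rule out is $V^*=I$. That case would force $|\Gamma(I)|=a(I)<|I|$, contradicting the assumption $a(I)\ge|I|$. This step is the main obstacle: the corollary implicitly relies on the setting of Lemma~\ref{lm:UDcondition}, and we must state explicitly that we are in the regime $a(I)\ge|I|$. In the complementary regime $a(I)<|I|$, the set $I$ may itself be the deficient set, and the claim must be understood as ``$I$ contains a stopping set $V^*$ with $a(V^*)<|V^*|$,'' possibly with $V^*=I$.

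A secondary subtlety is the definition of ``proper'' for rectangular matrices in Definition~\ref{proper}. There, every $s\times s$ submatrix must be proper, which is stronger than the existence of a column-saturating matching. We will therefore first note that, when $a(I)\ge|I|$, a failure of properness in the sense of Definition~\ref{proper} means that some choice of $|I|$ rows admits no matching. We will argue using the matching into all active rows, which is the property that governs ML decodability. If necessary, we will restrict to this weaker, matching-based reading of improperness, consistent with how Hall's theorem was applied in Theorem~\ref{thm1}.
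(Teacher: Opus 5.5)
Your argument is correct and takes the same Hall's-theorem route as the paper, whose proof only points back to the deficient-set analysis in Theorem~\ref{thm1}. Choosing a minimum-cardinality deficient set $V^*$ supplies the step the paper leaves implicit, namely that the Hall-violating set is itself a stopping set with $a(V^*)<|V^*|$. Your use of $a(I)\ge|I|$ to force $V^*\subsetneq I$ is the same hypothesis the paper states in the sentence right after the corollary.

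Both of your caveats are justified. Without $a(I)\ge|I|$ the claim fails: the pair $\{2,3\}$ with $\Gamma=\{b\}$ in Fig.~\ref{fig:Tanner}a is a minimal stopping set with improper $B_I$. The Hall step also works only if ``improper'' means that no column-saturating matching into the active rows exists, not the literal row-subset condition of Definition~\ref{proper}.
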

\begin{proof}
The proof uses arguments based on Hall's marriage theorem. See the analysis of improper matrices in the proof of Theorem~\ref{thm1}.
\end{proof}
From  Lemma~\ref{lm:UDcondition} and Corollary \ref{improp} follows that if $a(I)<|I|$ then the set $I$ cannot be corrected by ML decoder, regardless of whether  $B_I$ is proper or not. If $a(I)\ge |I|$ and $B_I$ is improper, then $I$ contains a smaller stopping set.
In what follows, we count only the number of uncorrectable erasure combinations corresponding to the minimal stopping sets. 
Then, we enumerate all erasure patterns which contain these stopping sets. 

Next, we consider an ensemble of random NB LDPC codes  obtained by random labeling base codes from the  Gallager's ensemble of binary $(J,K)$-LDPC codes by random equiprobable nonzero elements of GF$(q)$.   

Recall the definition of the Gallager ensemble of binary LDPC codes. 
\begin{definition}
The parity-check matrices 
of $(J,K)$-regular $[n,k]$-LDPC codes, ($n=MK$, $r=MJ$), 
from the Gallager ensemble consist of $J$ strips, each strip of size $M=r/J$. The first strip is the concatenation of $K$ identity matrices of order $M$. The other $J-1$ strips are obtained as random permutations of the columns of the first strip. 
\end{definition}

\begin{theorem} \label{th:3}
In the Gallager ensemble of $(J,K)$-regular LDPC codes of length $n=KM$, there exist codes over GF($q$) such that for sufficiently large $q$ the number of incomplete rank submatrices of $\nu$-columns is
\begin{equation}\label{eq:upper}
 b_\nu\le\sum_{w=2}^{\nu} S_w \binom{n-w}{\nu-w},
\end{equation}    
where 
\begin{equation} \label{eq:T3}
  S_w=\binom{n}{w}^{1-J}\;\;\sum_{a=1}^{w-1}
\left[
\left( 
\Big[\psi_{1}(\lambda,s)^M -1\Big]_{s^w} 
\right)^J
\right]_{\lambda^a}.
\end{equation}
and $\psi_{1}(\lambda,s)$ is defined in \eqref{eq:psi} below.
\end{theorem}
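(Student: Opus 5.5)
The plan has three parts: reduce the count of uncorrectable $\nu$-erasure patterns to a count of small "bad" column sets of the base matrix, bound the expected number of such sets over the Gallager ensemble, and then pass to a single code.

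\textbf{Step 1: reduction.} By Lemma~\ref{lm:UDcondition} and Corollary~\ref{improp}, for large $q$ a pattern $I$ is uncorrectable only if it contains a set $J\subseteq I$ with $|J|=w\ge 2$ and $a(J)<w$. Call such a $J$ bad; it may be taken minimal, in which case $B_J$ has no weight-one rows. Sets with $a(J)\ge |J|$ and $B_J$ proper can be made ML-correctable for all patterns at once. The tool is the column-by-column random labeling argument from the proof of Theorem~\ref{thm1}, based on Lemma~\ref{lemma2} and Corollary~\ref{col1}: with $q>1+\binom{n-1}{\nu-1}$, one labeling makes every weakly proper column set independent. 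Therefore every uncorrectable $I$ contains some bad $J$. A fixed $J$ of size $w$ is contained in exactly $\binom{n-w}{\nu-w}$ patterns of size $\nu$, so the union bound gives
\[
b_\nu\le \sum_{w=2}^{\nu} S_w(B)\binom{n-w}{\nu-w},
\]
where $S_w(B)$ is the number of bad $w$-sets of $B$.

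\textbf{Step 2: expected number of bad sets.} I will compute $\mathbb E\,S_w$ over the Gallager ensemble.
\begin{itemize}
\item \emph{One strip.} Consider the first strip, $K$ identity blocks of order $M$. Its $M$ rows partition the columns into $M$ groups of size $K$. For a column set of weight $w$, each row either contains no chosen column, or contains $i\ge 2$ of them (weight-one rows are excluded, as in the stopping-set requirement). Each row with $i\ge2$ counts as active. The per-row generating function is therefore
\[
\psi_1(\lambda,s)=1+\lambda\sum_{i\ge 2}\binom{K}{i}s^i,
\]
where $s$ marks weight and $\lambda$ marks active rows. The strip's joint enumerator is $\psi_1(\lambda,s)^M$. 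The $-1$ in the theorem presumably removes the empty configuration; I would check this against \eqref{eq:psi}.
\item \emph{Averaging over strips.} The strips are independent random column permutations. For a fixed $w$-set, the probability that a strip realizes a configuration is the strip count divided by $\binom{n}{w}$. Multiplying over the $J$ strips and summing over the $\binom{n}{w}$ sets gives $\binom{n}{w}^{1-J}\bigl([\,\psi_1^M-1\,]_{s^w}\bigr)^J$. Here $\lambda$ must be kept inside the $J$-th power, so the active-row counts of the strips add.
\item \emph{Extracting bad sets.} Taking the coefficients of $\lambda^a$ for $a=1,\dots,w-1$ keeps exactly the sets with $a(J)<w$, which is \eqref{eq:T3}.
\end{itemize}

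\textbf{Step 3: from the average to one code.} Since $\mathbb E\,S_w$ equals \eqref{eq:T3}, the bound in Step 1 holds in expectation. So some base matrix in the ensemble satisfies \eqref{eq:upper} with $S_w(B)$ at most this expectation. If one wants the inequality for all $\nu$ simultaneously, I would use Markov's inequality with a slack constant, or simply accept the bound per $\nu$ as stated. That matrix is then labeled for sufficiently large $q$ via Step 1.

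\textbf{Main obstacle.} The delicate point is the $\lambda$ bookkeeping. Active rows are counted per strip, so the coefficient extraction in $\lambda$ must come after the $J$-th power, as \eqref{eq:T3} indicates; swapping the order would count wrongly. A second care point is that restricting to weight-$\ge2$ rows gives only stopping-set candidates, which suffices because minimal bad sets are stopping sets. For large $q$, the combination of Lemma~\ref{lm:UDcondition} and Corollary~\ref{improp} is what justifies ignoring every pattern that does not contain such a set.
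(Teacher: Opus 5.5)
Your proposal is correct and follows the paper's own argument: the reduction via Lemma~\ref{lm:UDcondition} and Corollary~\ref{improp} to stopping sets with $a(J)<|J|$, the row enumerator $\psi_1(\lambda,s)$ and strip enumerator $\psi_1^M-1$, averaging over the $J$ strips to get $\binom{n}{w}^{1-J}\bigl([\psi_1^M-1]_{s^w}\bigr)^J$ with the $\lambda$-extraction after the $J$-th power, and the superset union bound with factor $\binom{n-w}{\nu-w}$. Your Step~3 spells out the averaging-to-existence step that the paper leaves implicit, but note that for the given $\nu$ it only yields a base matrix whose weighted sum $\sum_w S_w(B)\binom{n-w}{\nu-w}$ is at most its ensemble mean, not one with $S_w(B)\le\mathbb{E}\,S_w$ for every $w$ simultaneously.
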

In \eqref{eq:T3}, $\Big[p(x_1,...,x_n)\Big]_{x_i^j}$ denotes the coefficient of $x_i^j$, $i=1,..,n$ of the muti-variable polynomial $p(x_1,...,x_n)$.  
\begin{proof} 
According to Lemma~\ref{lm:UDcondition} and Corollary \ref{improp}, for a given number of erasures $\nu=|I|$, we can estimate the number of incomplete rank submatrices $H_{I}$ by enumerating the submatrices with different number of active rows $a(I)$, $a(I)<\nu$ corresponding to stopping set $I$ in the base matrix $B_I$. The set $\mathcal B_\nu$ of uncorrectable weight $\nu$ erasure patterns includes the subset $\mathcal S_\nu \subseteq \mathcal B_\nu $ consisting of minimal stopping sets.


We start with estimating $S_\nu=|\mathcal S_\nu|$. The derived bound will later be used to estimate $b_\nu=|\mathcal B_\nu | $.


For counting the number of stopping sets $I$ of size $\nu$, such that the corresponding submatrix $B_I$ has $a(I)<\nu$ active rows, 
we introduce a two-variable generating function of the number of active rows $a$ of weight two or more and the weight $w$ of erasure combination $I$ in the matrix with $r$ rows as 
\[
\psi_r(\lambda,s)=\sum_{w\in\{0,2,3,...n\}}\sum_{a=0}^{r} \psi_{a,w}\lambda^as^w,
\]
where coefficient before $\lambda^as^w$ is the number of weight $w$ erasure combinations producing $a$ rows of weight at least 2.
For a single row, taking into account that erasure combinations of weight zero and one do not produce a row of weight 2 or more, and $a \in \{0,1\}$, we obtain  
\begin{eqnarray}\label{eq:psi}
\psi_1(\lambda,s)&=&
\lambda^0+\lambda \left[\binom{K}{2}s^2+...+\binom{K}{K}s^K\right] 
\nonumber\\ 
&=& 1+\lambda\left[(1+s)^K-1-Ks \right].
\end{eqnarray}

The nonzero positions of rows of one strip do not intersect. Thus, the two-variable generating function  $\eta(\lambda,s)$ for one width $M$ strip in the base matrix from the Gallager ensemble is equal to 
\[
\eta(\lambda,s)=\sum_{w=2}^{n} \eta_w(\lambda)s^w=\psi_1(\lambda,s)^M-1.
\]
For a fixed $w$, the generating function of the number of active rows is 
\[
\eta_w(\lambda) = \Big[ \psi_1(\lambda,s)^M-1 \Big]_{s^w},
\]
where $\left[ \psi(\lambda,s)\right]_{s^w}$ denotes the coefficient for $s^w$ in the series expansion of $\psi(\lambda,s)$.

Assuming that all erasure sequences of weight $w$ are equiprobable, for a given $w$, 
probabilistic moment generating function for $a$ is 
\begin{equation}\label{tilde}
\eta_w(\lambda)/\binom{n}{w} =\binom{n}{w}^{-1}
\Big[ \psi_1(\lambda,s)^M-1 \Big]_{s^w}.
\end{equation}

Assuming that the $J$ strips are independent, the 
generating function for the number of active rows in the entire matrix $B$ for the fixed $w$ is  
$\eta_w(\lambda)^J$. The average generating function of the number of active rows 
among $\binom{n}{w}$ sequences of weight $w$
is
\begin{equation*}\label{xi}
\xi_w(\lambda)\!=\!\binom{n}{w} \!
\left(\!\frac{\eta_w(\lambda)}{\binom{n}{w}}\!\right)\!^J\!
=\!\binom{n}{w}^{1-J}\!\left(\Big[ \psi_1(\lambda,s)^M\!-\!1 \right]_{s^w}\!\! \Big)^J.
\end{equation*}
Finally, the probability of erasure combinations $I$ for which $a(I)<\nu$ is equal to the sum of the first $\nu-1$ coefficients in the series expansion of $\xi_\nu(\lambda)$.
Thus, we have the formula for $S_\nu$
 
\begin{eqnarray}
S_\nu&=&  
\sum_{a=2}^{\nu-1} \Big[\xi_\nu(\lambda) \Big]_{\lambda^a} \nonumber \\
&=&\binom{n}{\nu}^{1-J}\sum_{a=1}^{\nu-1}
\left[
\left( 
\Big[\psi_1(\lambda,s)^M -1\Big]_{s^\nu} 
\right)^J
\right]_{\lambda^a}.
\end{eqnarray}

To estimate $b_\nu$, notice that according to Corollary~\ref{improp} any uncorrectable erasure pattern $I\in \mathcal B_\nu$ contains as a subset some stopping set from $\mathcal S_\nu$. By summing up all  patterns overlapping with stopping sets, we obtain the following upper bound
\begin{equation}\label{eq:upper2}
 b_\nu\le\sum_{w=2}^{\nu} S_w \binom{n-w}{\nu-w}.
\end{equation}
\end{proof}

\begin{figure}
    \centering
    \includegraphics[width=0.82\linewidth]{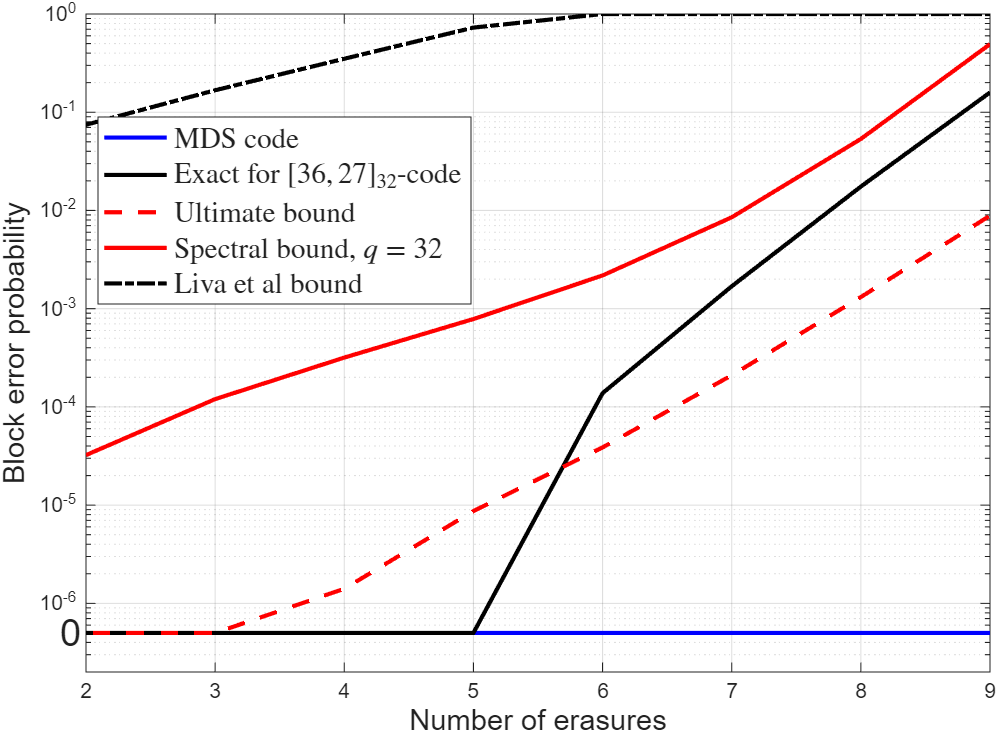}
        \caption{
 Conditional block error probability $P_{\text {block}}(\nu)$ as a function of erasure number $\nu$ for the NB LDPC code $[36,27]_{32}$  from Table~\ref{tab:codes} and randomly labeled Gallager's code with the same parameters. The exact value is obtained by exhaustive search over all erasure combinations. The  new bound \eqref{eq:upper}--\eqref{eq:T3}.  The spectral bound, and Liva et al. bound are defined by formulas \eqref{gen_spectrQ}, and \eqref{liva}, respectively    \label{fig:qec}}
\end{figure}
Comparison of the bounds in Fig. \ref{fig:qec} shows that although NB LDPC codes are not MDS codes, they can provide erasure decoding performance which approaches the performance of MDS codes when their alphabet size is not limited. In the next section, we present examples of short NB LDPC codes with quasi-cyclic (QC) base binary LDPC codes.         
\section{Examples of short NB LDPC codes } \label{examples}

\begin{example} \label{ex:ex1}
Let column weight be $J=2$, and the base matrix be
{\small
\[
B=
\begin{pmatrix}
 1& 0& 0& 0& 1& 1& 0& 0& 1& 0\\ 
 1& 1& 0& 0& 0& 0& 1& 0& 0& 1\\ 
 0& 1& 1& 0& 0& 1& 0& 1& 0& 0\\ 
 0& 0& 1& 1& 0& 0& 1& 0& 1& 0\\ 
 0& 0& 0& 1& 1& 0& 0& 1& 0& 1
 \end{pmatrix} 
\]}
For this (2,4)-regular LDPC code, $g=6$, $d_{\rm b}=3$. Taking rows with numbers 1...4 and columns corresponding to zeros in the last row, we obtain 
{\small
\[
B_1= \left(
\begin{array}{ccccc|ccccc}
 1& 0& 0& 1& 0& 1& 0& 1& 0& 0\\ 
 1& 1& 0& 0& 1& 0& 0& 0& 0& 1\\ 
 0& 1& 1& 1& 0& 0& 0& 0& 1& 0\\ 
 0& 0& 1& 0& 1& 1& 1& 0& 0& 0\\  \hline
 0& 0& 0& 0& 0& 0& 1& 1& 1& 1
 \end{array}
 \right)
\]}
a submatrix of size $4\times 5$ with rank at most 4 independently of labeling. Thus $d_{\rm u}=5$, slightly less than for MDS code $r+1=6$. Labeled matrix over GF(8) 
{\small
\[
H=
\begin{pmatrix}
  1& 0& 0& 0& 5& 1& 0& 0& 1& 0\\ 
  5& 1& 0& 0& 0& 0& 1& 0& 0& 1\\ 
  0& 5& 1& 0& 0& 1& 0& 1& 0& 0\\ 
  0& 0& 5& 1& 0& 0& 2& 0& 3& 0\\ 
  0& 0& 0& 5& 1& 0& 0& 1& 0& 7
\end{pmatrix}
\]}
attains this bound, $d_{\rm q}=d_{\rm u}=5$.

\end{example}
\begin{example} \label{ex:ex2}
Codes with $J=2$ based on complete graphs. 
For any odd $r$, the biadjacency matrix of complete graph $K_r$ of size $r \times n$, $n=\binom{r}{2}$  can be presented in the form 
\begin{equation} \label{eq:qc}
    B=\begin{pmatrix}
     C_1 & C_2& ... & C_{(r-1)/2}   
    \end{pmatrix}
\end{equation}
where $C_i$ are circulant matrices of order $r$. The first row of $C_i$ contains two ones, at positions 1 and  
$i+1$.   All such codes have exactly the same parameters as code from Example~\ref{ex:ex1}: $g=6$, $d_{\rm b}=3$, $d_{\rm u}=5$. 

In particular, for $r=9$ we obtain $[36,27,3]_2$-code from which after proper labeling the $[36,27,5]_q$-code of rate $R=3/4$ may be obtained. 
\end{example}

\begin{example} \label{ex:ex3}
Codes based on complete bipartite graphs $K_{r/2,r/2}$. Short and efficient LDPC codes with $g=8$ can be obtained from bipartite graphs. 

The base matrix of the code is
\begin{equation} \label{eq:complete}
B=\begin{pmatrix}
I&I&...&I\\
I& \mathcal J&...&\mathcal J^{r/2-1} 
\end{pmatrix},    
\end{equation}
where $\mathcal J$ is the cyclic permutation matrix. The parameters of the base code are $[n=r^2/4,n-r,4]_2$. 
For $r=16$, we obtain [64,16,4] base code. The ultimate distance for this code is $d_{\rm u}=6$.
\end{example}

\begin{example} $q$-ary codes from binary QC-LDPC codes with column weight $J=3$.
For binary codes with column weight $J=2$, the binary minimum distance $d_{\rm  b}$ is equal to half of the girth of the Tanner graph. To overcome this limitation, we use form \eqref{eq:qc} to construct more code examples by substituting circulant matrices of column weight $J=3$. 
Examples of codes are presented in Table~\ref{tab:codes}.

\end{example}

In Table \ref{tab:codes}, we give generators of QC base matrices of short rate $R=3/4$ NB LDPC codes whose minimum distance $d_{\rm q}$ achieves the upper bound $d_{\rm u}$. Each base matrix consists of four $r\times r$ circulants, each of which is obtained as $r$ shifts of the corresponding generator.

In the same table, we present the minimum distance of the base LDPC code $d_{\rm b}$, the girth of the base Tanner graph $g$, and the logarithm of the alphabet size $q$.  

Notice that the alphabet size needed for achieving the ultimate distance is much smaller than the estimate in Theorem~\ref{thm1}, and even less than for RS codes of the same length. 

\begin{table}[h]
\renewcommand{\arraystretch}{1.3}
    \centering
      \caption{Generators of QC base matrices of NB LDPC codes over GF$(q)$ of rate $R=3/4$ \label{tab:codes} }
     \begin{tabular}{|c|p{2.75cm}|c|c|c|c|c|}
    \hline
$[n,k], J$    
       & 
Generators  & 
$g_{\rm b}$   & 
$d_{\rm b}$  & 
$d_{\rm u}$ &
$\log_2q$ \\ \hline
[36,27], 2 & (1,2), (1,3), (1,4), (1,5)          & 6 & 3 &  5& 4\\ \hline 
[36,27], 3 & (1,2,4), (1,2,5), (1,2,7), (1,3,6)  & 4 & 4 &  6& 5\\ \hline
[52,39], 3 & (1,2,4), (1,2,6), (1,3,8), (1,4,8)  & 4 & 4 &  8& 7 \\ \hline 
[76,57], 3 & (1,2,5), (1,3,10), (1,6,12), (1,2,4)& 4 & 4 &  9& 6 \\ \hline 
\end{tabular}
\end{table}

\begin{figure}
    \centering
       \includegraphics[width=0.9\linewidth]{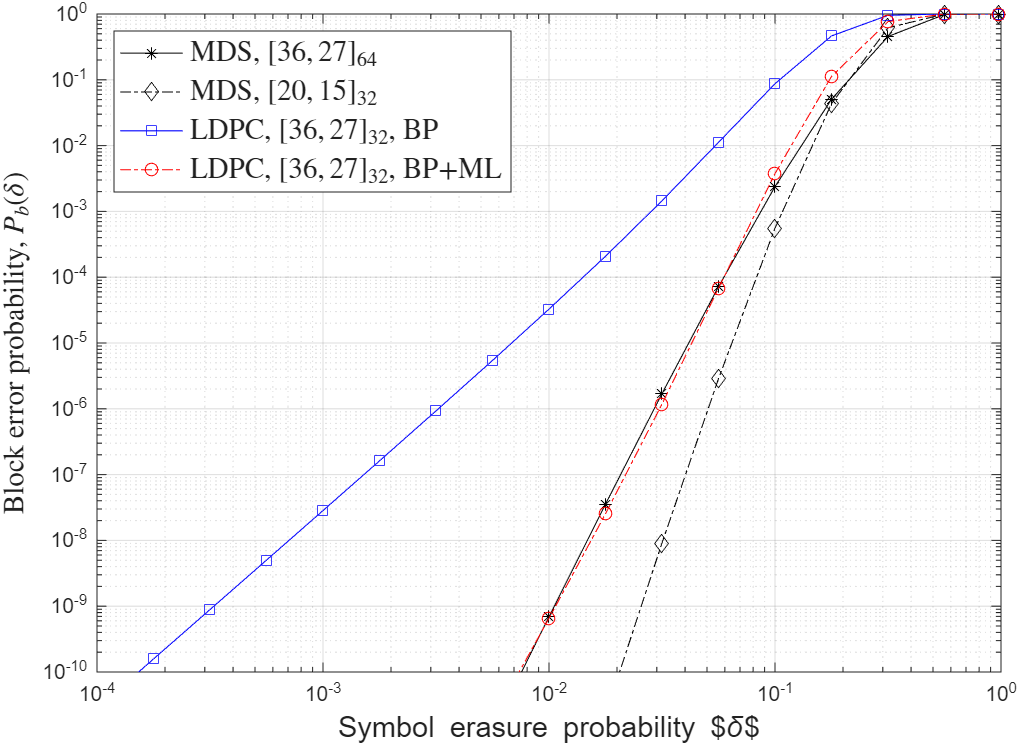}
    \caption{Numerically computed block error probability in QEC for LDPC code and two examples of MDS codes}
    \label{fig:ErrProb}
\end{figure}

\begin{table}
    \centering
    \renewcommand{\arraystretch}{1.3}
    \renewcommand{\tabcolsep}{0.1cm}
     \caption{Average number of arithmetic operations for decoding LDPC and RS codes in QEC with erasure probability $\delta=0.02$ }
    \begin{tabular}{|c|c|c|c|c|c|}
    \hline
     Code    &  Addi\-tions & \parbox{2cm}{Multipli\-cations}&Inversion &  Total& $P_{\rm b}(\delta)$\\ \hline
 $[36,27,6]_{32}$   & 108  & 110 & 1.72 & 220  & $1.0\cdot 10^{-6}$ \\ \hline
 $[20,15,6]_{32}$  & 303   & 442 & 30  & 775  & $1.0\cdot 10^{-6}$\\ \hline
 $[36,27,10]_{64}$   & 1098  & 1578 & 72 & 2748 & $1.0\cdot 10^{-8}$\\ \hline
    \end{tabular}
    \label{tab:Comp}
\end{table}

The decoding error probabilities in QEC  for MDS codes and for LDPC codes, both of rate $R=3/4$, are shown in Fig.~\ref{fig:ErrProb}. 
The LDPC code is the code from the second line in Table~\ref{tab:codes}. As competitors, we consider two MDS codes, $[20,15,6]_{32}$ and $[36,27,10]_{32}$.
LDPC code under BP decoding is less reliable than MDS codes. However, the same code can be decoded using a hybrid decoder that combines a BP decoder with the solution of a reduced-order linear system (ML decoder of reduced complexity). In this case, the error probability is the same as for $[20,15,6]_{32}$ MDS code. 

The average numbers of arithmetic operations over GF$(q)$  for these three scenarios are summarized in Table~\ref{tab:Comp}. For LDPC codes, the hybrid (BP+ML) decoder was analyzed. For decoding RS codes, Forney's algorithm was used. It can be seen that the decoder for the LDPC code is much simpler than that for RS codes.

\section{Conclusion}
In this work, we suggest an approach to the analysis and design of short NB LDPC codes by assigning field elements to nonbinary entries of a binary base matrix. First, we introduce the notion of ultimate distance achievable for a given base matrix regardless of the alphabet size $q$. Next, we propose an algorithm for finding the ultimate distance. A low-complexity  algorithm for finding the minimum distance of the NB LDPC code is presented. It is based  on the analysis of the code base matrix. We also use the random coding technique for estimating the erasure-correcting capability of NB LDPC codes. 

We demonstrate that NB LDPC codes are strong competitors of MDS codes for some practical scenarios.  

Many questions related to this approach remain unanswered. For example, the estimate on the alphabet size $q$ should be tightened.   


\section*{Acknowledgment}




\bibliographystyle{IEEEtran}
\bibliography{ref,listdec_bec,coding,extra}

\end{document}